\newtheorem{lemma}{Lemma}
\pgfplotsset{compat=1.9}
\tikzset{adjust near node color/.code={%
    \pgfplotscolormapdefinemappedcolor\pgfplotspointmetatransformed%
    \definecolor{mapped node color}{rgb}{\pgfmathresult}%
    \pgfkeys{/tikz/text=mapped node color!70!black}%
    }
}
\newcommand{\UnaryOperator}[2][]{%
  \ifx&#1&%
  \ensuremath{\mathop{}\mathopen{}#2\mathopen{}}%
  \else%
  \ensuremath{\mathop{}\mathopen{}#2\mathopen{}\left(#1\right)}%
  \fi%
}
\newcommand{\UnaryArray}[2][]{%
  \ifx&#1&%
  \ensuremath{\mathop{}\mathopen{}#2\mathopen{}}%
  \else%
  \ensuremath{\mathop{}\mathopen{}#2\mathopen{}\lbrack#1\rbrack}%
  \fi%
}
\newcommand{\Oh}[1]{\UnaryOperator[#1]{\mathcal{O}}}
\newcommand{\WM}{\textsf{WM}}
\newcommand{\WT}{\textsf{WT}}
\newcommand{\NBRP}[1]{\ensuremath{\pi_{#1}}}
\newcommand{\Text}[1]{\UnaryArray[#1]{\mathsf{T}}}
\newcommand{\Hist}[1]{\UnaryArray[#1]{\mathsf{Hist}}}
\newcommand{\Borders}[1]{\UnaryArray[#1]{\mathsf{SPos}}}
\newcommand{\BV}[1]{\UnaryArray[#1]{\mathsf{BV}}^{\prime}}
\newcommand{\BVT}[1]{\ensuremath{\mathsf{BV}_{#1}}}
\newcommand{\Zeros}[1]{\UnaryArray[#1]{\mathsf{Z}}}
\newcommand{\BitPre}[1]{\UnaryOperator[#1]{\mathop{\text{prefix}}}}
\newcommand{\Bits}[1]{\UnaryOperator[#1]{\mathop{\text{bits}}}}
\newcommand{\Bit}[1]{\UnaryOperator[#1]{\mathop{\text{bit}}}}
\newcommand{\Reverse}[1]{\UnaryOperator[#1]{\mathop{\text{reverse}}}}
\newtheorem{observation}{Observation}
\newcommand*\samethanks[1][\value{footnote}]{\footnotemark[#1]}
\newcommand{\removelatexerror}{\let\@latex@error\@gobble}
\begin{document}

\title{\Large Simple, Fast and Lightweight Parallel Wavelet Tree Construction\thanks{This work was supported by the German Research Foundation (DFG), priority programme ``Algorithms for Big Data'' (SPP 1736).}}

\author{Johannes Fischer\thanks{Technische Universit{\"a}t Dortmund, Department of Computer Science, \href{mailto:johannes.fischer@cs.tu-dortmund.de}{johannes.fischer@cs.tu-dortmund.de}, \href{mailto:florian.kurpicz@tu-dortmund.de}{florian.kurpicz@tu-dortmund.de}, \href{mailto:marvin.loebel@tu-dortmund.de}{marvin.loebel@tu-dortmund.de}}\\ \and Florian Kurpicz\samethanks\\ \and Marvin L{\"o}bel\samethanks}

\date{}
\maketitle

\begin{abstract}
  The wavelet tree (Grossi et al.~[SODA, 2003]) and wavelet matrix (Claude et al.~[Inf. Syst.,~47:15--32,~2015]) are compact indices for texts over an alphabet $[0,\sigma)$ that support \emph{rank}, \emph{select} and  \emph{access} queries in $O(\lg \sigma)$ time.
  We first present new practical sequential and parallel algorithms for wavelet tree construction.
  Their unifying characteristics is that they construct the wavelet tree \emph{bottom-up}, i.\,e., they compute the last level first.
  We also show that this bottom-up construction can easily be adapted to wavelet \emph{matrices}.
  In practice, our best sequential algorithm is up to twice as fast as the currently fastest sequential wavelet tree construction algorithm (Shun~[DCC,~2015]), simultaneously saving a factor of 2 in space.
  This scales up to 32 cores, where we are about equally fast as the currently fastest parallel wavelet tree construction algorithm (Labeit et al.~[DCC,~2016]), but still use only about 75\,\% of the space.
  An additional theoretical result shows how to adapt any wavelet \emph{tree} construction algorithm to the wavelet \emph{matrix} in the same (asymptotic) time, using only little extra space.
\end{abstract}

\section{Introduction}
The \emph{wavelet tree} (\WT{}), introduced in 2003 by Grossi et al.~\cite{Grossi2003}, is a space-efficient data structure that can answer \emph{access}, \emph{rank}, and \emph{select} queries for a text over an alphabet $[0,\sigma)$ in \Oh{\lg\sigma} time, requiring just ${n\lceil\lg\sigma\rceil(1+o(1))}$ bits of space.
\WT{}s are used as a basic data structure in many applications, e.\,g., text indexing \cite{Grossi2003}, compression~\cite{Makris2012WaveletTreesSurvey,Grossi2011WaveletTreesTheoryToPractice}, and in computational geometry as an alternative to fractional cascading \cite{Maekinen2006SubstringSearch}.
More information on the history of wavelet trees and many more of their applications can be found in the survey articles by Ferragina et al.~\cite{Ferragina2009MyriadWaveletTrees} and Navarro~\cite{Navarro2014waveletTree}.

\subsection{Our Contributions.}
In this paper, we focus on the construction of wavelet \emph{trees}, but the reader should note that with some trivial modifications all our sequential and parallel algorithms work as well for wavelet \emph{matrices} (and are actually also implemented for both variants).
The highlights of our new algorithms are the following:
\begin{itemize}
\item We present the fastest sequential \WT{}-construction algorithms (\emph{pcWT} and \emph{psWT}) that are up to twice as fast as \emph{serialWT}~\cite{Shun2015}, the previously fastest implementation for wavelet trees.
\item Simultaneously, our new algorithms use much less space than all previous ones: on realistically sized alphabets, \emph{pcWT} uses almost no space in addition to the input and output, while \emph{psWT} uses only one additional array of the same size as the text. Previous ones such as \emph{serialWT} or \emph{recWT}~\cite{Labeit2016parallelWaveletTree} use at least twice as much additional space.
\item We parallelize our new algorithms, obtaining the fastest parallel \WT{}-construction algorithms on medium-sized workstations of up to 32 cores.\footnote{Using more than 32 cores, \emph{recWT}~\cite{Labeit2016parallelWaveletTree} (the previously fastest parallel \WT{}-construction algorithm) remains faster.}
\item In particular, this results in the \emph{first} practical parallel algorithms for wavelet \emph{matrices}.
\end{itemize} 
A final (theoretical) contribution of this paper is that we show that the wavelet tree and the wavelet matrix are equivalent, in the sense that every algorithm that can compute the former can also compute the latter in the same time with only $(n+\sigma)(1+o(1))+(\sigma+2)\lceil\lg n\rceil$ bits of additional space.

\begin{figure}[t]
  \centering

  \begin{tabular}{cc}
    \adjustbox{valign=b}{\subfloat[Pointer-based wavelet tree.\label{sfig:example_pWT}]{%
      \includegraphics[scale=.66]{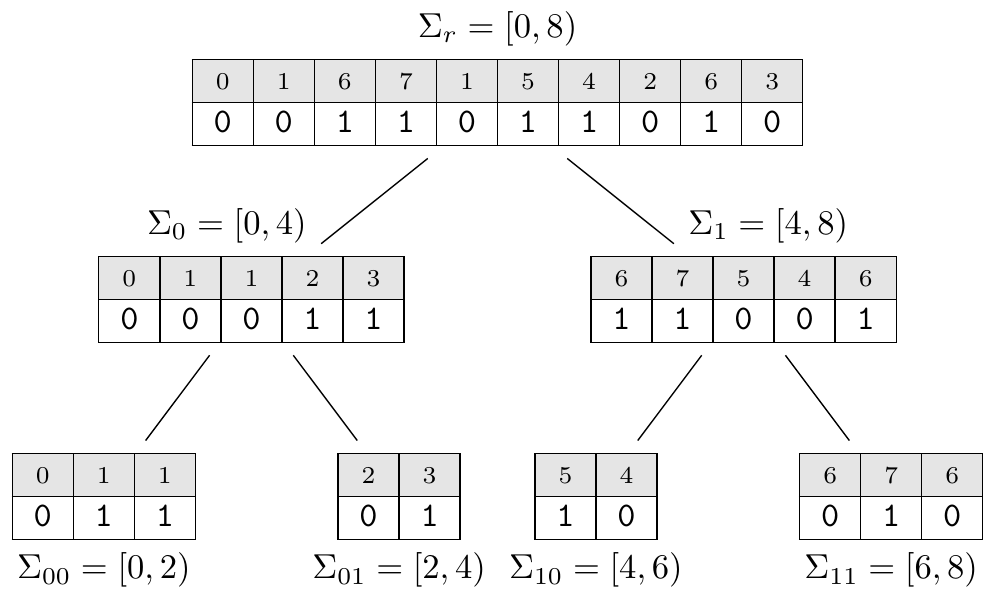}}}
    &      
    \adjustbox{valign=b}{\begin{tabular}{@{}c@{}}
    \subfloat[Level-wise wavelet tree.\label{sfig:example_lWT}]{%
      \includegraphics[scale=.66]{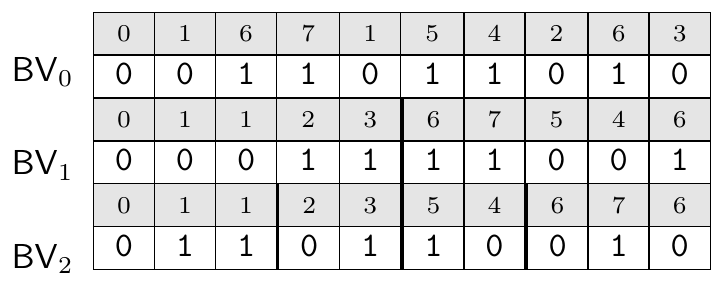}} \\
    \subfloat[Binary representation of \Text{}.\label{sfig:example_text}]{%
       \includegraphics[scale=.7]{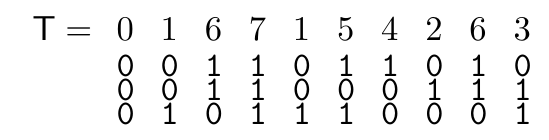}}
    \end{tabular}}
  \end{tabular}

  \caption{%
    The text $\Text{}=\texttt{0167154263}$, its binary representation in \protect\subref{sfig:example_text}, and the the two variants of wavelet trees of \Text{}.
    The light gray (\protect\tikz[baseline=.25ex]{ \fill[black!20,draw] (0, 0) rectangle (.3, .3); }) arrays contain the characters represented at the corresponding position in the bit vector and are not a part of the \WT{}.
    In \protect\subref{sfig:example_pWT}, $\Sigma_\alpha$ denotes the characters that are represented by the bit vector for $\alpha\in\lbrace r,0,1,00,01,10,11\rbrace$.
    In \protect\subref{sfig:example_lWT}, thick lines represent the borders of the intervals.
  \label{fig:example_wm_and_wt}}
\end{figure}

\subsection{Further Related Work.}
There exists lots of theoretical work when it comes to \WT{}-construction.
One line of research addresses lowering the construction time below \Oh{n\lg\sigma}, which is possible on a word-RAM by using word packing techniques.
Babenko et al.~\cite{Babenko2015WaveletTreeMeets} and Munro et al.~\cite{Munro2016FastWaveletTrees} independently obtained a construction time of \Oh{n\lceil\lg\sigma/\sqrt{\lg n}\rceil}.
Recently, Shun~\cite{Shun2016ImprovedParallelWaveletTree} has parallelized the word packing approach by Babenko et al.~\cite{Babenko2015WaveletTreeMeets} to improve the construction to \Oh{\sigma+\lg n} parallel time requiring \Oh{n\lceil\lg\sigma/\sqrt{\lg n}\rceil} work (here and in the following, we analyze parallel algorithms using  J{\'{a}}J{\'{a}}'s work-time paradigm \cite{JaJa1992Introduction}).

Fuentes-Sep{\'{u}}lveda et al.~\cite{Fuentes-Sepulveda2016parallelWaveletTree} were the first to describe and implement practical parallel \WT{}-construction algorithms, requiring \Oh{n} time and \Oh{n\lg\sigma} work.
Faster practical approaches were presented subsequently by Shun~\cite{Shun2015} and by Labeit et al.~\cite{Labeit2016parallelWaveletTree}, both requiring  \Oh{\lg n\lg\sigma} time and \Oh{n\lg\sigma} work.

A different line of research addresses the (theoretical) working space during construction:
Claude et al.~\cite{Claude2011WaveletTreeSpaceEfficient} and Tischler~\cite{Tischler2011WaveletTree} showed how to reduce the construction space for the \WT{} to $O(\lg n)$ bits.
However, none of these algorithms have been implemented beyond a proof-of-concept-status.

Although many papers (e.\,g., \cite{Shun2015,Shun2016ImprovedParallelWaveletTree}) on wavelet \emph{tree} construction mention that their algorithms can also be adapted to wavelet \emph{matrices}, none of them has actually been implemented.
The only (sequential and semi-external) implementation of a \WM{}-construction algorithm we are aware of is from the succinct data structure library (\emph{SDSL})~\cite{Gog2014}.
Finally, we mention that a faster and smaller \emph{alternative} to the \WT{} (that can only be used in very specific text indexing applications) can be constructed semi-externally~\cite{Gog2016FasterMinuter} and that there is a recent online \WT{}-construction algorithm~\cite{Fonseca2017WaveletTreeOnline}.

\section{Preliminaries}
Let $\mathsf{T}=\Text{0}\dots\Text{n-1}$ be a text of length $n$ over an alphabet $\Sigma=[0,\sigma)$.
Each character $\Text{i}$ can be represented using $\lceil\lg\sigma\rceil$ bits.
In this paper, the leftmost bit is the \emph{most significant bit} (MSB) and the \emph{least significant bit} (LSB) is the rightmost bit.
We denote the binary representation of a character $\alpha\in\Sigma$ as \Bits{\alpha}, e.\,g. $\Bits{3}=(\mathtt{011})_2$.
Whenever we write a binary representation of a value, we indicate it by a subscript two.
The $k$-th bit (from MSB to LSB) of a character $\alpha$ is denoted by $\Bit{k, \alpha}$ for all $0\leq k<\lceil\lg\sigma\rceil$.
Given $\alpha\in\Sigma$, the \emph{bit prefix} of size $k$ of $\alpha$ are the $k$ most significant bits, i.\,e., $\BitPre{k,\alpha}=(\Bit{0, \alpha}\dots\Bit{k-1,\alpha})_2$.
We interpret sequences of bits as integer values.

Let \BVT{} be a bit vector of size $n$.
The operation $\rank_0(\BVT{}, i)$ returns the number of 0's in $\BVT{}[0,i)$, whereas $\select_0(\BVT{}, i)$ returns the position of the $i$-th 0 in \BVT{}.
The operations $\rank_1(\BVT{}, i)$ and $\select_1(\BVT{}, i)$ are defined analogously.

Given an array $\textsf{A}$ of $n$ integers and an associative operator $+$ (we only use addition), the zero based \emph{prefix sum} for $\textsf{A}$ returns an array $\textsf{B}[0,n)$ with $\textsf{B}[0]=0$ and $\textsf{B}[i]=\textsf{A}[i-1]+\textsf{B}[i-1]$ for all $i\in[1,n)$.\footnote{If not zero based, $\textsf{B}$ is usually defined as $\textsf{B}\lbrack 0\rbrack=\textsf{A}\lbrack 0\rbrack$ and $\textsf{B}\lbrack i\rbrack=\textsf{A}\lbrack i-1\rbrack+\textsf{B}\lbrack i-1\rbrack$ for all $i\in[1,n)$.}
The prefix sum can be computed in \Oh{\lg n} parallel time and \Oh{n} work~\cite{JaJa1992Introduction}.

\subsection*{Wavelet Trees.}
Let \Text{} be a text of length $n$ over an alphabet $[0,\sigma)$.
The \emph{wavelet tree} (\WT{}) of \Text{} is a complete and balanced binary tree.
Each node of the \WT{} represents characters in $[\ell, r)\subseteq [0,\sigma)$.
The root of the \WT{} represents characters in $[0,\sigma)$, i.\,e., all characters.
The left (or right) child of a node representing characters in $[\ell, r)$ represents the characters in $[\ell, (\ell+r)/2)$ (or $[(\ell+r)/2,r)$, respectively).
A node is a leaf if $l+2 \geq r$.

The characters in $[\ell,r)$ at a node $v$ are represented using a bit vector $\BVT{v}$ such that the $i$-th bit in $\BVT{v}$ is $\Bit{d(v),{\Text{}_{[\ell,r)}}\lbrack i\rbrack}$, where $d(v)$ is the depth of $v$ in \WT{}, i.\,e., the number of edges on the path from the root to $v$, and $\Text{}_{[\ell,r)}$ denotes the array containing the characters of \Text{} (in the same order) that are in $[\ell,r)$.
The interval of a \WT{} at which a character is represented at level $\ell$ is encoded by its length-$\ell$ bit prefix, as shown in the following Observation:
\begin{observation}[Fuentes-Sep{\'u}lveda et al.~\cite{Fuentes-Sepulveda2014}]\label{obs:interval_wt}
  Given a character $\Text{i}$ for $i\in[0,n)$ and a level $\ell\in[1,\lceil\lg\sigma\rceil)$ of the \WT{}, the interval pertinent to $\Text{i}$ in $\BVT{\ell}$ can be computed by \BitPre{\ell,\Text{i}}.
\end{observation}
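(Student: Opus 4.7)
The plan is to prove the observation by induction on the level $\ell$, exploiting the recursive definition of the \WT{}: at each node, the characters it represents are split into its two children according to a single bit of the characters' binary encoding.

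For the base case $\ell=1$, I would unfold the definition directly. The root represents $[0,\sigma)$ and its two children represent $[0,\sigma/2)$ and $[\sigma/2,\sigma)$. A character $\alpha$ falls into the left child iff $\Bit{0,\alpha}=0$, and into the right child iff $\Bit{0,\alpha}=1$. Hence the node at level $1$ containing $\Text{i}$ is the $\Bit{0,\Text{i}}$-th (from the left), which is exactly $\BitPre{1,\Text{i}}$, and the interval in $\BVT{1}$ pertinent to $\Text{i}$ is this child's interval.

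For the inductive step, I would assume that for some level $\ell-1$, the node whose bit vector contains the entry for $\Text{i}$ is the $\BitPre{\ell-1,\Text{i}}$-th node at level $\ell-1$. By the tree construction rule in the paper, this node's left child represents the lower half of its alphabet range, its right child the upper half. Since the splitting criterion at depth $\ell-1$ is precisely $\Bit{\ell-1,\Text{i}}$, the position of $\Text{i}$ at level $\ell$ lies in the $(2\cdot\BitPre{\ell-1,\Text{i}}+\Bit{\ell-1,\Text{i}})$-th node at that level. By definition of the bit prefix, this quantity equals $\BitPre{\ell,\Text{i}}$, completing the induction. A small additional observation is that $\BVT{\ell}$ is defined as the left-to-right concatenation of the bit vectors of all level-$\ell$ nodes, so identifying the correct node at level $\ell$ is equivalent to identifying the correct interval in $\BVT{\ell}$.

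The only subtlety I anticipate is bookkeeping: one has to be careful that the relative order of entries of $\Text{}$ is preserved at each level (so that ``the interval for $\Text{i}$'' in $\BVT{\ell}$ is well-defined), which follows because each bit vector $\BVT{v}$ is built from $\Text{}_{[\ell,r)}$ in the original order of $\Text{}$. Apart from this, the argument is a direct induction and no nontrivial combinatorial step is needed.
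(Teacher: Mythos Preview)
Your inductive argument is correct and is the natural way to verify this fact. Note, however, that the paper does not actually supply its own proof of this observation: it is stated as an attributed result (to Fuentes-Sep\'{u}lveda et al.) and used without further justification. So there is no ``paper's proof'' to compare against; your proposal fills in exactly the routine verification the authors chose to omit. The induction on $\ell$, together with the identity $\BitPre{\ell,\alpha}=2\cdot\BitPre{\ell-1,\alpha}+\Bit{\ell-1,\alpha}$ and the fact that the level-wise bit vector $\BVT{\ell}$ is the left-to-right concatenation of the node bit vectors at depth $\ell$, is precisely what is needed, and your remark about stability (the relative order within each interval being inherited from $\Text{}$) correctly handles the only subtlety.
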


There are two variants of the \WT{}: the \emph{pointer-based} and the \emph{level-wise} \WT{}.
The pointer-based \WT{} uses pointers to represent the tree structure, see Figure~\ref{sfig:example_pWT}.
In the level-wise \WT{}, we concatenate the bit vectors of all nodes at the same depth in a pointer-based \WT{}.
Since we lose the tree topology, the resulting bit vectors correspond to a \emph{level} that is equal to the depth of the concatenated nodes.
We store only a single bit vector $\BVT{\ell}$ for each level $\ell\in[0,\lceil\lg\sigma\rceil)$, see Figure~\ref{sfig:example_lWT}.
This retains the functionality from the pointer-based \WT{}~\cite{Maekinen2006SubstringSearch,Maekinen2007RankAndSelect}, but reduces the redundancy for the binary rank- and select-structures on the bit vectors.

The wavelet tree (both variants) can be used to generalize the operations \access, \rank and \select from bit vectors to alphabets of size $\sigma$.
Answering these queries then requires \Oh{\lg\sigma} time.
To do so, the bit vectors are augmented by binary rank and select structures.
We point to \cite{Claude2015} for a detailed description of the operations.
In the following, we work with the level-wise \WT{}.

\section{New Wavelet Tree Construction Algorithms}
As shown in Observation~\ref{obs:interval_wt}, each level $\ell$ of the \WT{} contains disjoint intervals corresponding to the \mbox{length-$\ell$} bit prefixes of the characters in \Text{}.
This enables us to start on the \emph{last} level $\lceil\lg\sigma\rceil-1$, and then iteratively work through the other levels in a \emph{bottom-up} manner until the tree is fully constructed.
To get this process started, we need to know the borders of the intervals on the last level, for which we must first compute the \emph{histogram} of the text characters (as in the first phase of counting sort).
On subsequent levels $\ell\in[0,\lceil\lg\sigma\rceil-1)$ we use the fact that we can quickly compute the histograms of the considered bit prefixes of size $\ell$ from the histogram of bit prefixes of size $\ell+1$, \emph{without rescanning the text}.
Saving one scan of the text per level is one of the reasons that our algorithms are faster.
This and the resulting low memory consumption (up to 50\,\% of the competitors) are the main distinguishing features of our new algorithms from the previous \WT{}-construction algorithms.
We assume that arrays are initialized with 0's.
In this section, $id$ refers to the \emph{identity function}.
Later (when we construct wavelet matrices in \S{\ref{sec:the_wavelet_matrix}), we need to replace the identity function with the \emph{bit-reversal permutation}.

\subsection{Sequential Wavelet Tree Construction.}
\label{ssec:sequential_wavelet}
Our first \WT{}-construction algorithm (\emph{pcWT}, see Algorithm~\ref{alg:pcwm}) starts with the computation of the number of occurrences of each character in \Text{} to fill the initial histogram $\textsf{Hist}[0,\sigma)$.
In addition, the first level of the \WT{} is computed, as it contains the MSBs of all characters in text order (lines~\ref{alg:initial_histogram} and \ref{alg:compute_first_level}).
This requires \Oh{n} time and $\sigma\lceil\lg n\rceil$ bits space for the histogram.
Later on we require additional $\sigma\lceil\lg n\rceil$ bits to store the starting positions of the intervals (see array $\Borders{}[0,\sigma)$ in Algorithm~\ref{alg:pcwm}).

Initially, we have a histogram for all characters in \Text{}.
During each iteration (say at level $\ell$) we need the histogram for all bit prefixes of size $\ell - 1$ of the characters in \Text{}.
Therefore, if we have the histogram of length-$\ell$ bit prefixes, we can simply compute the histogram of the bit prefixes of size $\ell - 1$ by ignoring the last bit of the current prefix.
E.\,g., the amount of characters with bit prefix $(\mathtt{01})_2$ is the total number of characters with bit prefixes $(\mathtt{010})_2$ and $(\mathtt{011})_2$.
We can do so in \Oh{\sigma} time requiring no additional space reusing the space of the histogram of length-$\ell$ bit prefixes (line~\ref{alg:update_histogram}).

Using the updated histogram, we compute the starting positions of the intervals of the characters that can by identified by their bit prefix of size $\ell-1$ for level $\ell$.
The starting position of the interval representing characters with bit prefix $0$ is always $0$, therefore we only compute the starting positions for all other bit prefixes (line~\ref{alg:loop_compute_other_borders}).
Again, this requires \Oh{\sigma} time and no additional space, as we can reuse the space used to store the starting positions of the intervals of the previously considered level.

Last, we need to compute the bit vector for the current level $\ell$.
To do so, we simply scan \Text{} once from left to right and consider the bit prefix of length $\ell - 1$ of each character.
Since we have computed the the starting position (\Borders{}) in the bit vector where the $\ell$-th MSB of the characters needs to be stored, we can store it accordingly and increase the position for characters with the same bit prefix by one (lines~\ref{alg:get_position_by_bit_prefix_and_update}~and~\ref{alg:set_bit}).
This requires \Oh{n} time and no additional space.
Since we need to compute \Oh{\lg\sigma} levels, this results in the following Lemma:
\begin{lemma}
  Algorithm pcWT computes the \WT{} of a text of length $n$ over an alphabet of size $\sigma$ in \Oh{n\lg\sigma} time using $2\sigma\lceil\lg n\rceil$ bits of space in addition to the input and output.
\end{lemma}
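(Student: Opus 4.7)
The plan is to separate the three claims—correctness of the constructed \WT{}, the \Oh{n\lg\sigma} time bound, and the $2\sigma\lceil\lg n\rceil$-bit working-space bound—with only correctness being nontrivial.

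For correctness I would induct on the level under construction, processed bottom-up from $\lceil\lg\sigma\rceil-1$ down to $1$ (level $0$ is emitted directly during the initial character scan, whose correctness is immediate). The induction hypothesis is that at the start of the iteration targeting level $\ell$, the array \Hist{} holds the histogram of the bit prefixes that index the intervals of \BVT{\ell}. Two things then need verifying. First, the in-place merge step—summing pairs of adjacent buckets—correctly turns a length-$k$ prefix histogram into a length-$(k-1)$ prefix histogram, which follows immediately from the fact that the two length-$k$ extensions of a length-$(k-1)$ prefix partition the characters sharing the shorter prefix. Second, combined with Observation~\ref{obs:interval_wt}, the prefix sum over the updated \Hist{} gives exactly the left boundaries of the intervals of level $\ell$ inside \BVT{\ell}, stored in \Borders{}.

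The delicate part is the bit-vector filling itself. I would maintain the invariant that at the moment character $\Text{i}$ is processed, $\Borders{p}$—where $p$ is the bit prefix identifying the level-$\ell$ interval of $\Text{i}$—stores the absolute position of the next free slot of that interval. The invariant is preserved because \Borders{} is initialized to the interval boundaries, the scan visits characters in text order, and within each interval the level-wise \WT{} stores the bits of characters in text order. Since the histogram count equals the interval length, no cursor ever overruns its interval, so writing the appropriate bit and incrementing the cursor fills each interval (and hence \BVT{\ell}) exactly.

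For the complexity bounds, each of the $\lceil\lg\sigma\rceil$ outer iterations performs one \Oh{n} text scan and \Oh{\sigma} updates on \Hist{} and \Borders{}, on top of the one-shot \Oh{n} initial scan, giving \Oh{n\lg\sigma+\sigma\lg\sigma}=\Oh{n\lg\sigma} under the usual assumption $\sigma\leq n$. The auxiliary space is entirely accounted for by \Hist{} and \Borders{}, each of size $\sigma$ storing values bounded by $n$, totalling $2\sigma\lceil\lg n\rceil$ bits and reused in place across iterations. The only real obstacle I anticipate is keeping the level-index, prefix-length, and bit-position conventions mutually consistent with Observation~\ref{obs:interval_wt}; once that bookkeeping is fixed, the arguments above compose into all three claims.
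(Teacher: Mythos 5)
Your proposal is correct and matches the paper's own justification: the paper likewise accounts for one $\Oh{n}$ text scan plus $\Oh{\sigma}$ histogram/border updates per level, reuses \Hist{} and \Borders{} in place across levels for the $2\sigma\lceil\lg n\rceil$-bit bound, and relies on Observation~\ref{obs:interval_wt} for correctness (you merely make the correctness induction more explicit than the paper does). The only nit is the bookkeeping you already flag yourself: at the \emph{start} of the iteration for level $\ell$, \Hist{} still holds the length-$(\ell+1)$-prefix histogram, and only after the pairwise merge does it hold the length-$\ell$-prefix histogram that indexes the intervals of $\BVT{\ell}$.
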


\begin{figure}[t]
  \centering
  \begin{algorithm2e}[H]
    \small
    \caption{pcWT (sequential)\label{alg:pcwm}}
      \For{$i = 0$ \KwSty{to} $n - 1$}{
        $\Hist{\Text{i}}\textsf{++}$\label{alg:initial_histogram}\;
        $\BVT{0}\lbrack i\rbrack=\FBit(0,\Text{i})$\label{alg:compute_first_level}
      }
      \For{$\ell=\lceil\lg\sigma\rceil - 1$ \KwSty{to} $1$}{\label{alg:loop_other_levels_begin}
        \For{$i = 0$ \KwSty{to} $2^{\ell} - 1$}{
          $\Hist{i}=\Hist{2i}+\Hist{2i + 1}$\label{alg:update_histogram}
        }
        \For{$i = 1$ \KwSty{to} $2^{\ell} - 1$}{\label{alg:loop_compute_other_borders}
          \hangindent=1.25\skiptext\hangafter=1
          $\Borders{id(i)} = \Borders{id(i - 1)} + \Hist{id(i - 1)}$\label{alg:compute_other_borders}
        }
        \For{$i = 0$ \KwSty{to} $n - 1$}{
          $pos = \Borders{\BitPre{\ell,\Text{i}}}\textsf{++}$\label{alg:get_position_by_bit_prefix_and_update}\;
          $\BVT{\ell}\lbrack pos\rbrack=\Bit{\ell, \Text{i}}$\label{alg:set_bit}
        }
      }
  \end{algorithm2e}
\end{figure}

\subsection{Parallel Wavelet Tree Construction.}
\label{sec:parallel_wavelet_tree_construction}
The \emph{na\"{i}ve} way to parallelize the pcWT algorithm is to parallelize it such that each core is responsible for the construction of one level of the \WT{}.
To this end, each core needs to first compute the corresponding histogram of the level, and then the resulting starting positions of the intervals (each requiring $2^\ell\lceil\lg n\rceil$ bits of space at level $\ell$).
This results in the following Lemma:
\begin{lemma}
  The parallelization of pcWT computes the \WT{} in \Oh{n} time with \Oh{n\lg\sigma} work requiring $4\sigma\lceil\lg n\rceil$ bits of space in addition to the input and output.
\end{lemma}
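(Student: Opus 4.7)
The plan is to dedicate one core to each level $\ell\in[0,\lceil\lg\sigma\rceil)$ of the \WT{} and let it execute, in isolation, the same operations that sequential \emph{pcWT} carries out at that level. The crucial observation is that the work for level $\ell$ in Algorithm~\ref{alg:pcwm} only needs \Text{} itself and the histogram of length-$\ell$ bit prefixes of \Text{}; nothing has to be shared between cores. Consequently, the sequential dependency between consecutive levels (which \emph{pcWT} exploited only to update \Hist{} in place) can be dropped, and every core recomputes its histogram from scratch.

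Each core $\ell$ first allocates its own $\Hist{}[0,2^\ell)$ and $\Borders{}[0,2^\ell)$, then scans \Text{} once using $\BitPre{\ell,\Text{i}}$ to index \Hist{} (this is \Oh{n+2^\ell} time), converts \Hist{} to the starting positions \Borders{} by a single prefix-sum pass in \Oh{2^\ell} time, and finally runs the analogue of lines~\ref{alg:get_position_by_bit_prefix_and_update}--\ref{alg:set_bit} to fill $\BVT{\ell}$ in a second pass over \Text{}. Each core therefore spends \Oh{n+2^\ell}=\Oh{n} time (assuming $\sigma\le n$), and since all cores operate concurrently on disjoint working arrays, the parallel time is \Oh{n}.

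For the work bound I would sum the per-core costs, obtaining $\sum_{\ell=0}^{\lceil\lg\sigma\rceil-1}\Oh{n+2^\ell}=\Oh{n\lg\sigma+\sigma}=\Oh{n\lg\sigma}$. For the additive space, the per-core histograms and borders are the only non-trivial arrays: at level $\ell$ they jointly use $2\cdot 2^\ell\lceil\lg n\rceil$ bits, and a geometric summation gives $2\lceil\lg n\rceil\sum_{\ell=0}^{\lceil\lg\sigma\rceil-1}2^\ell=2\lceil\lg n\rceil(2^{\lceil\lg\sigma\rceil}-1)<4\sigma\lceil\lg n\rceil$ bits, using $2^{\lceil\lg\sigma\rceil}\le 2\sigma$.

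The only delicate point is checking that the histogram built directly from \Text{} on core $\ell$ coincides with the one the sequential algorithm would produce by iteratively merging adjacent entries; this is immediate from Observation~\ref{obs:interval_wt}, since in either case entry $p$ of \Hist{} simply counts the indices $i$ with $\BitPre{\ell,\Text{i}}=p$. Beyond that verification, all remaining steps are routine adaptations of the sequential argument.
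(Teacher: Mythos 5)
Your proposal is correct and follows the same route as the paper: one core per level, each independently recomputing its own histogram and interval starting positions (costing $2\cdot 2^{\ell}\lceil\lg n\rceil$ bits at level $\ell$, which sums to less than $4\sigma\lceil\lg n\rceil$) and then filling its bit vector by a scan of the text. The paper only sketches this, so your more explicit accounting of the per-level time, the summed work, and the geometric space bound is a faithful elaboration rather than a different argument.
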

The disadvantage of this na\"{i}ve parallelization is that we cannot efficiently use more than $\lceil\lg\sigma\rceil$ cores.
To use more cores, instead of parallelizing level-wise, we could do the following.
Each of the $p$ cores gets a slice of the text of size $\Theta(\frac{n}{p})$ and computes the corresponding bits in the bit vectors on \emph{all} levels.
On level $\ell$, each core $c$ first computes its \emph{local} histogram $\Hist{}_c[0,\sigma)$ according to the length-$\ell$ bit-prefixes of the input characters.
Using a parallel zero based prefix sum operation, these local histograms are then combined such that in the end each core knows where to write its bits (arrays $\Borders{}_c[0,\sigma)$ for $c\in [0,p)$).
As in the sequential algorithm, the final writing is then accomplished by scanning the local slice of the text from left to right, writing the bits to their correct places in $\BVT{\ell}$, and incrementing the corresponding value in $\Borders{}_c$.

This comes with the problem that two or more cores may want to concurrently write bits to the same computer word, resulting in \emph{race conditions}.
To avoid these race conditions, one would have to implement mechanisms for exclusive writes, which would result in unacceptably slow running times.
We rather propose the following approaches.

\subsubsection{Using Sorting.}
\label{ssec:text_decomposition}
Instead of having each core write randomly to each bit vector $\BVT{\ell}$, we want each core to be responsible for the same interval on each level of the \WT{}.
To this end, we \emph{globally} sort the input text (using the starting positions $\Borders{}_c$ on level $\ell$).
The resulting sorted text $\mathsf{T}_{\text{sorted}}$ is then again split into slices of size $\Theta(\frac{n}{p})$.
Then, each core scans its local slice from left to right and writes the corresponding bits to the bit vector $\BVT{\ell}$ (also from left to right).\footnote{Note that this is different from \emph{domain decomposition}, a popular approach for parallel \WT{}-construction~\cite{Labeit2016parallelWaveletTree,Fuentes-Sepulveda2016parallelWaveletTree} that we discuss in \S{\ref{sec:domain_decomposition}}.}
To avoid race conditions and \emph{false sharing}, i.\,e., working on data in a cache line that has been changed by another core, we further make sure that the size of each slice of the text is a common multiple of the cache lines' length and the size of a computer word.

The resulting parallel \WT{}-construction algorithm (\emph{psWT}, see Algorithm \ref{alg:pswm}) works as follows:
First, each of the $p$ cores computes the local histogram ($\Hist{}_c$ for $c\in[0,p)$) of its slice of \Text{} and, at the same time, fills $\BVT{0}$ (lines~\ref{alg:pswm_first_histogram} and \ref{alg:pswm_first_level}).
We compute the local starting positions ($\Borders{}_c$ for $c\in[0,p)$), using the zero based prefix sum of
$\Borders{}_0\lbrack 0\rbrack,\Borders{}_1\lbrack 0\rbrack,\dots,\Borders{}_{p-1}\lbrack 0\rbrack,\dots,\Borders{}_0\lbrack\sigma -1\rbrack,\dots,\Borders{}_{p-1}\lbrack\sigma - 1\rbrack$, with respect to (w.r.t.) $id$, see line~\ref{alg:pswm_second_prefix_sum}.
Here, ``w.r.t. $id$'' means that character $id(i)$ follows character $id(i - 1)$ for all $i\in[1,2^j)$.
Note that we replace $id$ with the bit-reversal permutation when constructing \WM{}s in \S{\ref{sec:adaption_of_our_algorithms}}.
All in all this requires \Oh{\lg p+\sigma} time, \Oh{n+p\sigma} work and $2p\sigma\lceil\lg n\rceil$ bits of space using $p$ cores.
Using this information ($\Hist{}_{c}$ and $\Borders{}_{c}$), we can compute the corresponding values of $\Hist{}_c$ and $\Borders{}_c$ for all levels $\ell\in[1,\lceil\lg\sigma\rceil)$.

For each level (see loop starting at line~\ref{alg:pswm_level_loop}) the time and work required are the same as during the first step.
There is no additional space required since we can reuse the space used during the previous iteration.
To sort the text, we use the local starting positions (to represent the intervals in counting sort, see line~\ref{alg:pswm_counting_sort}).
Storing the sorted text requires additional $n\lceil\lg\sigma\rceil$ bits of space (which we reuse at each level).
After sorting the text, each core can simply insert its bits at the corresponding position in $\BVT{\ell}$ (line \ref{alg:last_line_psWT}).
This leads to the following Lemma:
\begin{lemma}
  Algorithm psWT computes the \WT{} of a text of length $n$ over an alphabet of size $\sigma$ in \Oh{\lg\sigma\left(\frac{n}{p}+\lg p + \sigma\right)} time and \Oh{\lg\sigma(n+p\sigma)} work requiring $2p\sigma\lceil\lg n\rceil+n\lceil\lg\sigma\rceil$ bits of space in addition to the input and output using $p$ cores.
\end{lemma}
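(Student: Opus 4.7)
The plan is to analyze Algorithm psWT level-by-level in the spirit of the analysis of pcWT, paying attention to which quantities are computed in parallel and which must be combined across cores. For each of the $\lceil\lg\sigma\rceil$ levels, I would isolate four sub-routines: (i) computing or updating the local histograms $\Hist{}_c$; (ii) forming the local starting positions $\Borders{}_c$ from them via a global prefix sum; (iii) redistributing \Text{} into the sorted buffer $\mathsf{T}_{\text{sorted}}$ via the parallel counting sort; and (iv) writing the appropriate bits into $\BVT{\ell}$.

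For (i), both the initial MSB/histogram pass and every subsequent in-place update from length-$\ell$ to length-$(\ell-1)$ bit-prefix counts are embarrassingly parallel across the $p$ cores: the initial pass is $\Oh{n/p}$ time and $\Oh{n}$ work, while each halving step costs $\Oh{\sigma}$ per core and $\Oh{p\sigma}$ work in total. For (ii), I would decompose the prefix sum on the length-$p\sigma$ interleaved array described in the lemma statement into two phases: a per-core sequential prefix sum over the $\sigma$ characters (run in parallel across cores, $\Oh{\sigma}$ time, $\Oh{p\sigma}$ work), followed by a standard parallel prefix sum over the $p$ per-core totals for each character ($\Oh{\lg p}$ time, $\Oh{p\sigma}$ work); adding the second phase's contribution back gives the $\Oh{\lg p + \sigma}$ time and $\Oh{p\sigma}$ work bound. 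For (iii), once $\Borders{}_c$ is set, each core streams through its slice of \Text{} and deposits each character at the position indicated by the relevant entry of $\Borders{}_c$; since the starting positions partition $\mathsf{T}_{\text{sorted}}$ into disjoint intervals, no two cores write to the same location, giving $\Oh{n/p}$ time and $\Oh{n}$ work. For (iv), because $\mathsf{T}_{\text{sorted}}$ is grouped first by length-$\ell$ bit-prefix and then by core, each core's output region in $\BVT{\ell}$ is a single contiguous block, and the cache-line-aligned slicing discussed in the text prevents two cores from touching the same machine word, again at $\Oh{n/p}$ time and $\Oh{n}$ work.

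Summing the four sub-routines, one iteration of the main loop costs $\Oh{n/p + \lg p + \sigma}$ time and $\Oh{n + p\sigma}$ work; multiplying by the $\lceil\lg\sigma\rceil$ levels yields the claimed time and work bounds. For space, the two per-core arrays $\Hist{}_c$ and $\Borders{}_c$ together store $2p\sigma$ integers of $\lceil\lg n\rceil$ bits each and can be reused across levels, contributing $2p\sigma\lceil\lg n\rceil$ bits, while $\mathsf{T}_{\text{sorted}}$ contributes the additional $n\lceil\lg\sigma\rceil$ bits and is likewise reused, matching the stated total.

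The main obstacle I expect is step (iv). Although the sort makes each core's output in $\BVT{\ell}$ contiguous by construction, it is not immediately obvious that the word- and cache-line alignment of slice boundaries actually eliminates every concurrent write to the same word throughout the level, in particular because the slice sizes must be aligned simultaneously with the partitioning of $\mathsf{T}_{\text{sorted}}$ and with the underlying bit-vector layout. Establishing this hardware-level invariant, rather than the routine cost accounting for (i)--(iii), is the delicate part of the proof.
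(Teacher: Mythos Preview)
Your proposal is correct and follows essentially the same route as the paper: the paper's argument (given in the text preceding the lemma rather than in a formal proof) likewise decomposes each of the $\lceil\lg\sigma\rceil$ levels into local histogram update, global prefix sum over the $p\sigma$ interleaved counts, parallel counting sort into $\mathsf{T}_{\text{sorted}}$, and a contiguous write into $\BVT{\ell}$, arriving at the same per-level cost of $\Oh{n/p+\lg p+\sigma}$ time and $\Oh{n+p\sigma}$ work with the stated reusable space. Your identification of the word-alignment issue in step~(iv) as the only non-routine point also mirrors the paper's treatment, which handles it exactly as you describe (slice sizes chosen as multiples of the word and cache-line size).
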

This algorithm can efficiently use up to $p\leq n/\sigma$ cores.
Using that many cores yields \Oh{n\lg\sigma} work with \Oh{\lg\sigma\left(\sigma +\lg n\right)} time.
Employing more cores would only increase the required work, without achieving a better running time than on $n/\sigma$ cores.
In theory, better work can be archived by using word packing techniques.
The algorithm can also be used to compute the \WT{} \emph{sequentially}, where it proved to be very efficiently (see~\S{\ref{sec:experiments}). 

Using sorting for the parallel construction of \WT{}s has already been considered by Shun~\cite{Shun2015} (\emph{sortWT}).
There, the \WT{} is computed from the first to the last level.
Hence, for each level the text has to be scanned twice for sorting and once (the sorted text) for the computation of the bit vector.

\begin{figure}[t]
  \centering
  \begin{algorithm2e}[H]
    \small
    \SetKwFor{ParFor}{parfor}{do}{endfor}
    \caption{psWT (parallel)\label{alg:pswm}}
      \ParFor{$c = 0$ \KwSty{to} $p - 1$}{
        \For{$i = c\frac{n}{p}$ \KwSty{to} $(c+1)\frac{n}{p}$}{
          $\Hist{}_c\lbrack\Text{i}\rbrack\textsf{++}$\label{alg:pswm_first_histogram}\;
          $\BVT{0}\lbrack i\rbrack=\Bit{0,\Text{i}}$\label{alg:pswm_first_level}
        }
      }
      \For{$\ell=\lceil\lg\sigma\rceil - 1$ \KwSty{to} $1$}{\label{alg:pswm_level_loop}
        \ParFor{$c = 0$ \KwSty{to} $p - 1$}{
          \For{$i = 0$ \KwSty{to} $2^{\ell} - 1$}{
            $\Hist{}_c\lbrack i\rbrack=\Hist{}_c\lbrack 2i\rbrack+\Hist{}_c\lbrack 2i+1\rbrack$
          }
        }
        $\Borders{}_c=$\,Parallel zero based prefix sum w.r.t. $id$\;\label{alg:pswm_second_prefix_sum}
        $\mathsf{T}_{\text{sorted}}=\CountingSort{\Text{},\,\Borders{}}$\label{alg:pswm_counting_sort}\;
        \ParFor{$c = 0$ \KwSty{to} $p - 1$}{
          \For{$i = c\frac{n}{p}$ \KwSty{to} $(c+1)\frac{n}{p}$}{\label{alg:start_the_same}
            $\BVT{\ell}\lbrack i\rbrack=\Bit{\ell, \mathsf{T}_{\text{sorted}}\lbrack i\rbrack}$\label{alg:last_line_psWT}
          }
        }
      }
  \end{algorithm2e}
\end{figure}

\subsubsection{Domain Decomposition.}
\label{sec:domain_decomposition}
The \emph{domain decomposition}~\cite{Labeit2016parallelWaveletTree,Fuentes-Sepulveda2016parallelWaveletTree} is a popular technique for parallel \WT{}-construction.
There, each core gets a slice of the text of size $\Theta(\frac{n}{p})$ and computes a \emph{partial} \WT{} for that slice (in parallel).
We use the sequential version of our \WT{}-construction algorithms \emph{pcWT} and \emph{psWT} (see \S{\ref{ssec:sequential_wavelet}} and \S{\ref{ssec:text_decomposition}}) to compute the partial \WT{}s (we call the resulting parallel algorithms \emph{ddpcWT} and \emph{ddpsWT}).
The final \WT{} is computed by merging all partial \WT{}s in parallel.

To merge the partial \WT{}s, we concatenate the intervals of all partial \WT{}s that correspond to the same bit prefix and store these concatenations with respect to their corresponding bit prefix at the correct level of the merged \WT{}.
We can do so in parallel by using the borders of the intervals of the partial \WT{}s that have already been computed during their construction.
To this end, a zero based prefix sum computes the starting positions of the intervals in the merged \WT{}.
Then, each processor writes its intervals at the corresponding positions.
Here, we also avoid race conditions by choosing the borders of the merged intervals according to the width of a computer word.
As the computation of the partial \WT{}s can be parallelized perfectly, we only require one parallel prefix sum, and the merging is one parallel scan of all bit vectors.
We do not merge in-place (and thus need another $n\lceil\lg\sigma\rceil$ bits for the final \WT{}).
When computing the partial \WT{}s with \emph{psWT}, we can reuse the space required for sorting the text.
This results in the following Lemma:

\begin{lemma}
  Algorithms ddpcWT and ddpsWT compute the \WT{} of a text \Text{} of length $n$ over an alphabet of size $\sigma$ in \Oh{\frac{n}{p}\lg\sigma+\lg p + \sigma} time and \Oh{n\lg\sigma+p\sigma} work requiring $2p\sigma\lceil\lg n\rceil+n\lceil\lg\sigma\rceil$ bits of space in addition to the input and output using $p$ cores.
\end{lemma}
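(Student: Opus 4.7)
The proof naturally splits along the two phases of the algorithm: the parallel, independent construction of the $p$ partial \WT{}s, and the subsequent parallel merging into a single level-wise \WT{}. For each phase I would bound time, work and space, and then combine; the space bound requires the most care since we must verify that buffers used in the first phase can be reused in the second.

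For the first phase I would apply the earlier lemma (for \emph{pcWT} in the case of \emph{ddpcWT}, respectively for \emph{psWT} run with a single core in the case of \emph{ddpsWT}) to each slice of length $\Theta(n/p)$. Since the partial \WT{}s are constructed entirely independently, the $p$ invocations contribute time equal to a single sequential invocation on $n/p$ characters, namely $\Oh{(n/p)\lg\sigma + \sigma}$ (the additive $\sigma$ coming from the per-level histogram and border maintenance of pcWT, dominating the $\sigma$-term of psWT when $n/p\geq\sigma$), and work $\Oh{n\lg\sigma + p\sigma}$ in total. The per-core auxiliary space is $2\sigma\lceil\lg n\rceil$ bits for the histogram and the starting-position array, so the total over all cores is $2p\sigma\lceil\lg n\rceil$ bits; the partial bit vectors themselves total $n\lceil\lg\sigma\rceil$ bits. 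When \emph{psWT} is used sequentially per core, its sorted-text buffer of size $n\lceil\lg\sigma\rceil$ bits (aggregated over the $p$ slices) is not needed after this phase and can be reused in the second phase.

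For the second phase, I would observe that all relevant interval borders produced by the partial \WT{}s—namely $\sigma$ borders per core across all levels combined, i.e.\ $p\sigma$ values in total—can be laid out in one array and combined into the global starting positions of the merged intervals by a \emph{single} parallel zero-based prefix sum. Using the standard parallel prefix-sum bound this costs $\Oh{\lg(p\sigma)} = \Oh{\lg p + \lg\sigma}$ time and $\Oh{p\sigma}$ work. Given these global targets, the merge itself is a parallel scan of the concatenation of all partial bit vectors: each core copies $\Oh{(n/p)\lg\sigma}$ bits into the output \WT{}, and the alignment of merged interval boundaries with computer-word boundaries (as discussed in the algorithm description) prevents race conditions and false sharing, so the write phase adds $\Oh{(n/p)\lg\sigma}$ time and $\Oh{n\lg\sigma}$ work. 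The extra space for the merge destination is $n\lceil\lg\sigma\rceil$ bits, which either accounts for the not-in-place copy (\emph{ddpcWT}) or overlaps the reused sorting buffer (\emph{ddpsWT}).

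Summing the two phases gives the claimed $\Oh{(n/p)\lg\sigma + \lg p + \sigma}$ time, $\Oh{n\lg\sigma + p\sigma}$ work and $2p\sigma\lceil\lg n\rceil + n\lceil\lg\sigma\rceil$ bits of additional space. The subtle step I expect to require the most attention is the space accounting: one must check that the $p$ per-core histograms and border arrays are still available when the parallel prefix sum for merging is launched (so they cannot be overwritten by the partial bit vectors), and that the $n\lceil\lg\sigma\rceil$-bit destination can be identified with the sort buffer of \emph{ddpsWT} without violating any data dependency. Everything else is a routine aggregation of the per-core invocations of the sequential lemmas together with the standard cost of parallel prefix sum.
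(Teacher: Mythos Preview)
Your proposal is correct and follows essentially the same two-phase decomposition as the paper: independent sequential construction of the $p$ partial \WT{}s via Lemma~1 (resp.\ the single-core instance of Lemma~3), followed by a single parallel prefix sum over the per-core interval borders and a parallel scan to merge, with the $n\lceil\lg\sigma\rceil$-bit buffer accounted for exactly as the paper states (not-in-place merge for \emph{ddpcWT}, reuse of the sort buffer for \emph{ddpsWT}). Your write-up is in fact more explicit than the paper's about the work/time arithmetic and about the space-reuse dependency, but the underlying argument is the same.
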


\section{Experiments}
\label{sec:experiments}
We conducted our experiments on a workstation equipped with two Intel Xeon E5-2686 processor (22 cores with frequency up to 3\,GHz and cache sizes: 32\,kB L1D and L1I, 256\,kB L2 and 40\,MB L3) with Hyper-threading turned off and 256\,GB RAM.
We implemented our algorithms using C{}\verb!++!.
We compiled all code using \texttt{g}{}\verb!++! 6.2 with flags \texttt{-03} and \texttt{-march=native}.
To express parallelism, we use \emph{OpenMP}~4.5 in our algorithms.

\subsection{Algorithms.} In our experiments, we compare the implementations of the following algorithms (all sources have last been accessed on 2017-10-27):
\begin{itemize}
  \item \textbf{pcWT} and \textbf{psWT}: the new \WT{}-construction algorithms presented in this paper.
  We also parallelized these algorithm using \emph{domain decomposition} (\textbf{ddpcWT} and \textbf{ddpsWT}).\footnote{Available from \url{https://github.com/kurpicz/pwm}.}
  \item \textbf{serialWT}~\cite{Shun2016ImprovedParallelWaveletTree}: the previously fastest sequential \WT{}-construction algorithm that is based on~\cite{Fuentes-Sepulveda2014}.\footnote{Available from \url{https://people.csail.mit.edu/jshun}.}
  \item \textbf{levelWT}~\cite{Shun2016ImprovedParallelWaveletTree}: this algorithm constructs the \WT{} \emph{top-down} and determines the intervals similar to \emph{pcWT} but needs to scan the text twice for each level.\samethanks
  \item \textbf{recWT}~\cite{Labeit2016parallelWaveletTree}: the fastest parallel \WT{}-construction algorithm (when using more than 32 cores). Here, the text is split (in parallel) while computing the \WT{} top-down, such that each interval can be computed independently.\footnote{Available from \url{https://github.com/jlabeit/wavelet-suffix-fm-index}.}
  \item \textbf{ddWT} and \textbf{pWT}~\cite{Fuentes-Sepulveda2016parallelWaveletTree}: the original implementation of domain decomposition (\emph{ddWT}) and a parallel \WT{}-construction algorithm similar to \emph{levelWT}.\footnote{Available from \url{https://github.com/jfuentess/waveletree}.}
\end{itemize}
Summing up the state of the art prior to our work, \emph{serialWT} is the fastest sequential \WT{}-construction algorithm, and \emph{recWT} is the fastest parallel \WT{}-construction algorithm.
When it comes to memory usage, \emph{pWT} is the modest but up to 20 times slower than \emph{recWT}.
Due to the huge difference in running time, we have listed the results of our experiments for \emph{ddWT} and \emph{pWT} separately in Table~\ref{tab:fuentes_experiments}.
Other implementations (e.\,g. the \WM{}- and \WT{}-construction algorithms in the SDSL or \emph{sortWT}~\cite{Shun2015}) were already proved slower and/or more space consuming.

\subsection{Data Sets.}
For our experiments we use real-world texts and a text over a word-based alphabets, see Table~\ref{tab:description_of_texts} for more details.
All sources have last been accessed on 2017-10-27.
\begin{itemize}
  \item \textbf{XML}, \textbf{DNA}, \textbf{ENG}, \textbf{PROT} and \textbf{SRC}: texts from the \emph{Pizza and Chili} corpus containing XML documents, DNA data, English texts, protein data and source code.
  These files represent common real-world data ({\small\url{http://pizzachili.dcc.uchile.cl}}).
  \item \textbf{1000G}: collection of DNA data sets from the \emph{1000 Genomes Project}.
  This is an example of a text with a very small alphabet ({\small\url{http://www.internationalgenome.org/data}}).
  \item \textbf{CC}: concatenation of different websites (without the HTML tags) crawled by the \emph{common crawl} corpus. We removed all additional meta data, which has been added by the corpus ({\small\url{http://commoncrawl.org}}).
  \item \textbf{WORDS}: a collection of Russian news article from 2011 that we transformed in a word-based (integer) alphabet.
  This text is an example of a text with a large alphabet ({\small\url{http://statmt.org/wmt16/translation-task.html}}).
\end{itemize}

\begin{table}[t]
  \centering
  \footnotesize
  \begin{tabular}{cccccc}
    \toprule
    Name & $n/10^8$ & $\sigma$ & Name & $n/10^8$ & $\sigma$ \\
    \cmidrule[0.225ex](r){1-3}
    \cmidrule[0.225ex](l){4-6}
    XML            & $2.9$ & 97   & SRC        & $2.1$ & 230  \\
    \cmidrule(r){1-3}
    \cmidrule(l){4-6}
    DNA            & $4$ & 16     & 1000G      &  $88.2$ & 4 \\
    \cmidrule(r){1-3}
    \cmidrule(l){4-6}
    ENG            & $22.1$ & 239 & CC         & $100.7$ & 243 \\
    \cmidrule(r){1-3}
    \cmidrule(l){4-6}
    PROT           & $11.8$ & 27  & WORDS      & $1.4$& 2245405\\
    \bottomrule
  \end{tabular}
  \caption{%
    Statistics of the data used in our experiments.
    \label{tab:description_of_texts}}
\end{table}

\subsection{Results.}
Due to the structure of the paper we first focus on the \WT{}-construction algorithms, but the running times and the memory usage of our \WM{}-construction algorithms are nearly the same and can be found in \S{\ref{sec:the_wavelet_matrix}} (see Table~\ref{tab:wm_experiments}).
All running times are the median on five executions of the corresponding \WT{}-construction algorithm (without the construction of rank/select-support).
An overview of all running times and memory consumption can be found in Figure~\ref{fig:results_time_and_memory}.

\subsubsection{Running Times.}
In the sequential case, our new algorithm \emph{pcWT} and \emph{psWT} are of similar speed with \emph{psWT} being slightly faster than \emph{pcWT} being the second fastest.
On large alphabets \emph{pcWT} is 1.55 times as fast as \emph{psWT}, but on average \emph{psWT} is 2.75\,\% (and at most 9.62\,\%) faster than \emph{pcWT}.
Both algorithms are faster than the previously fastest \WT{}-construction algorithm \emph{serialWT}.
Compared with \emph{serialWT}, \emph{psWT} is on average 1.92 timer and at most 3.23 times as fast as \emph{serialWT}.
This results in a new fastest sequential \WT{}-construction algorithm that is also more memory efficient (see \S{\ref{sec:memory_consumption}}).

The situation is different in the parallel case (on 32 cores), where two algorithms (\emph{recWT}, \emph{ddpcWT}) are of similar speed.
On average \emph{ddpcWT} is 13\,\% faster than \emph{recWT}.
Especially on larger texts and texts with small alphabet (PROT and 1000G and CC), \emph{ddpcWT} is faster than \emph{recWT}.
On shorter texts and texts with large alphabets \emph{recWT} is faster than \emph{ddpcWT}, albeit \emph{pcWT} is of similar speed (but still slower).

Note that there is no distinct sequential version of our \emph{domain decomposition} algorithms as no merging is required and the \WT{} is constructed using \emph{pcWT} or \emph{psWT}.
On larger texts (e.\,g. 1000G and CC), our domain decomposition algorithms are faster than \emph{pcWT} and \emph{psWT}.
For really large alphabets, the domain decomposition algorithms are not well suited, as merging becomes very cost intensive for each level.

When it comes to small alphabets, the parallel version of \emph{pcWT} is not a good choice, as the number of cores that can be used is very small (we can only use 2 cores when computing the \WT{} for 1000G, see \S{\ref{sec:parallel_wavelet_tree_construction}}).
Furthermore, one of our presented algorithm (\emph{ddpcWT}) is of similar speed as the currently fastest parallel \WT{}-construction algorithm, while requiring less space.
Still, our algorithms do not scale as well as \emph{recWT}, see Figure~\ref{fig:speedup}.

The fast running times of our algorithms can be explained with the bottom-up construction.
Here, we require one scan less of the text per level than our competitors (except for \emph{recWT} that also requires only one scan of the text per level).

\begin{figure}
  \includegraphics{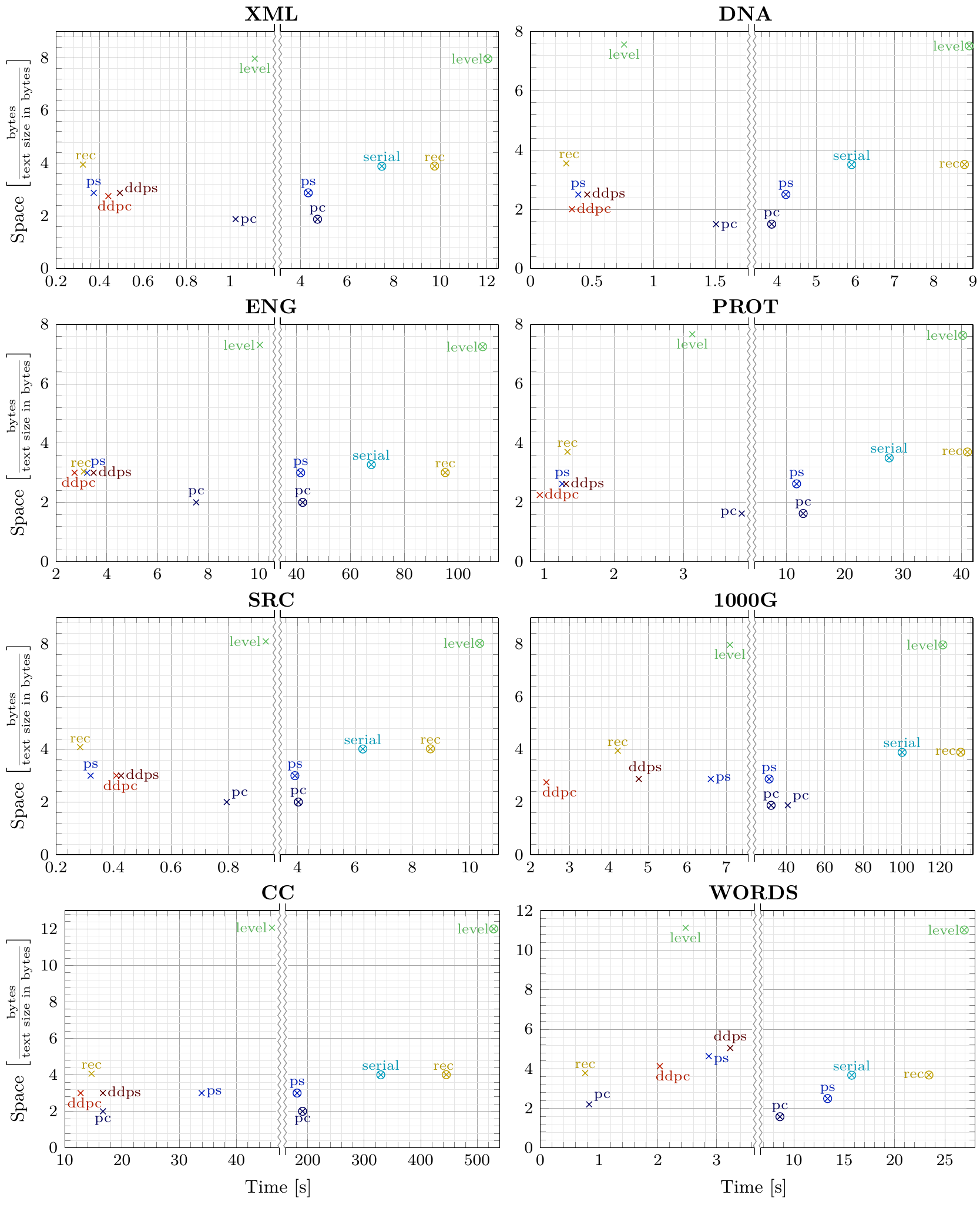}
  \caption{Running time and memory usage of the \WT{}-constuction algorithms measured in seconds and bytes per byte of the input text, resp.
  Algorithms run on one core are marked with $\otimes$ whereas algorithms running on 32 cores are marked with $\times$.\label{fig:results_time_and_memory}}
\end{figure}

\begin{figure}
  \includegraphics{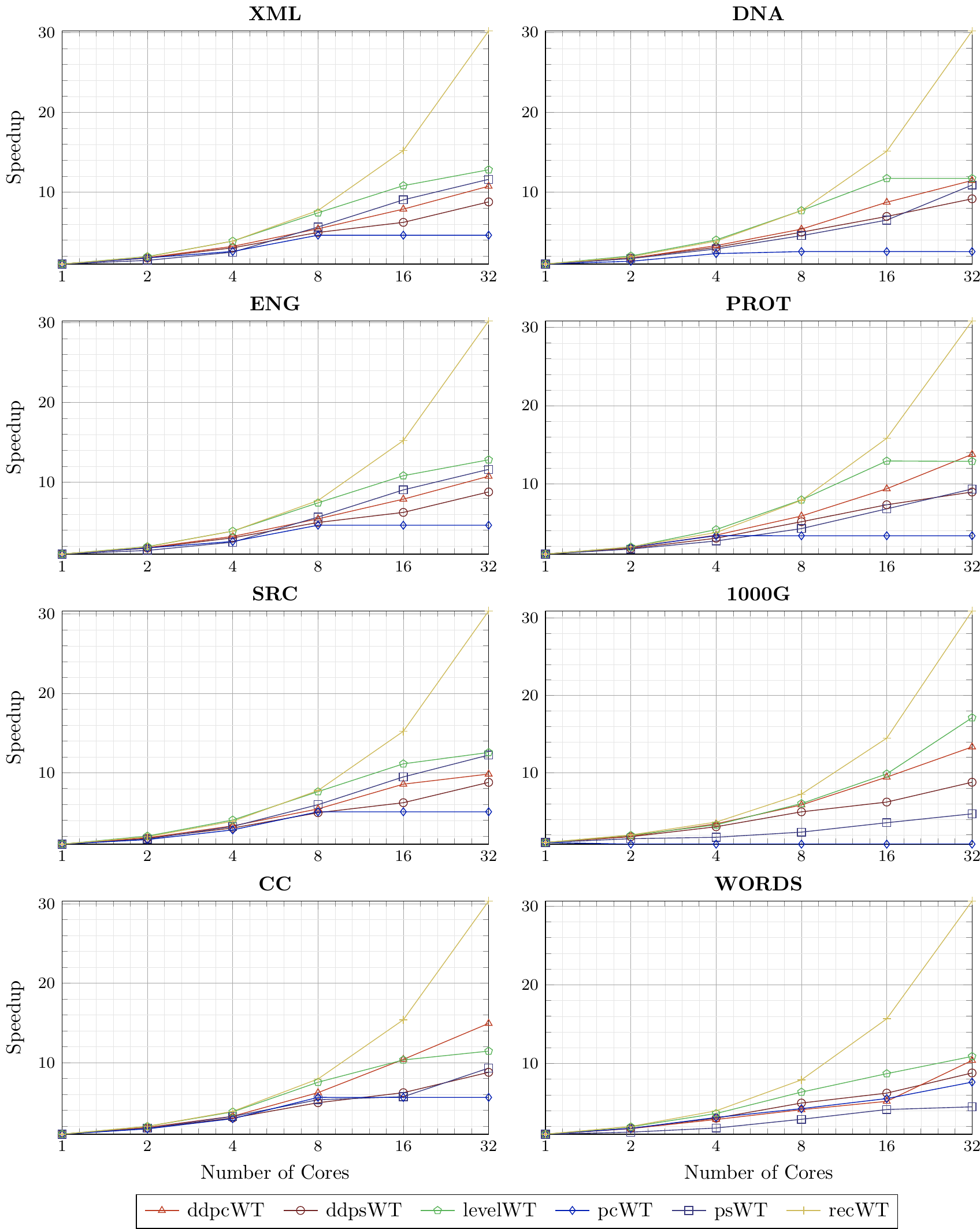}
  \caption{Comparison of the speedup of the \WT{}-construction algorithms.\label{fig:speedup}}
\end{figure}

\begin{table*}[t]
  \centering
  \scriptsize
  \begin{tabular}{ccccccccc}
    \toprule
      & \multicolumn{4}{c}{ddWT} & \multicolumn{4}{c}{pWT}\\
    \cmidrule[0.225ex](r){2-5}
    \cmidrule[0.225ex](l){6-9}
    Text & $t_1$ & $t_{32}$ & $\text{m}_1$ & $\text{m}_{32}$ & $t_1$ & $t_{32}$ & $\text{m}_1$ & $\text{m}_{32}$\\
    \cmidrule(r){1-1}\cmidrule(r){2-3}\cmidrule(lr){4-5}\cmidrule(lr){6-7}\cmidrule(l){8-9}
     XML    &14.231&5.078&2.815&2.783&13.574&2.450&1.944&1.966\\
    \cmidrule(r){1-1}\cmidrule(r){2-3}\cmidrule(lr){4-5}\cmidrule(lr){6-7}\cmidrule(l){8-9}
     DNA    &-&-&-&-&13.060&4.152&1.489&1.511\\
    \cmidrule(r){1-1}\cmidrule(r){2-3}\cmidrule(lr){4-5}\cmidrule(lr){6-7}\cmidrule(l){8-9}
     ENG    &136.866&8.909&2.987&2.966&132.871&21.094&1.993&1.994\\
    \cmidrule(r){1-1}\cmidrule(r){2-3}\cmidrule(lr){4-5}\cmidrule(lr){6-7}\cmidrule(l){8-9}
     PROT   &-&6.136&-&2.258&49.105&-&1.633&-\\
    \cmidrule(r){1-1}\cmidrule(r){2-3}\cmidrule(lr){4-5}\cmidrule(lr){6-7}\cmidrule(l){8-9}
     SRC    &12.620&5.176&3.073&3.025&12.056&1.869&2.091&2.124\\
    \cmidrule(r){1-1}\cmidrule(r){2-3}\cmidrule(lr){4-5}\cmidrule(lr){6-7}\cmidrule(l){8-9}
     1000G  &163.438&5.849&1.438&1.385&159.540&83.642&1.124&1.125\\
    \cmidrule(r){1-1}\cmidrule(r){2-3}\cmidrule(lr){4-5}\cmidrule(lr){6-7}\cmidrule(l){8-9}
     CC     &-&24.0159&-&2.994&624.458&90.914&1.401&1.402\\
    \cmidrule(r){1-1}\cmidrule(r){2-3}\cmidrule(lr){4-5}\cmidrule(lr){6-7}\cmidrule(l){8-9}
     WORDS  &26.386&9.563&2.786&2.803&26.666&3.673&1.869&1.883\\
    \bottomrule
  \end{tabular}
  \caption{%
    Experimental results of the \WT{}-construction algorithms \emph{ddWT} and \emph{pWT} \cite{Fuentes-Sepulveda2016parallelWaveletTree}.
    The experiments were conducted on the hardware and test instances that are described in~\S{\ref{sec:experiments}}.
    We measured the running time (in seconds) of the algorithms using one core ($t_1$) and 32 cores ($t_{32}$).
    The memory is given in bytes per byte of the input text when using one core ($\text{m}_1$) and 32 cores ($\text{m}_{32}$).
    A dash denotes that the algorithm could not compute the \WT{} of the given text.\label{tab:fuentes_experiments}}
\end{table*}

\subsubsection{Memory Consumption.}
\label{sec:memory_consumption}
The disadvantages of our algorithms when it comes to scaling are redeemed by their memory consumption, see again Figure~\ref{fig:results_time_and_memory}.
There we marked the number of bytes required per byte of input.
The lowest memory consumption is achieved by \emph{pcWT}, which matches our theoretical assumptions.
Next, \emph{psWT} requires 35\,\% more memory than \emph{pcWT}, but still 27\,\% less than \emph{recWT} when both are executed in parallel.
In the sequential case, \emph{pcWT} and \emph{psWT} require 50\,\% and 25\,\% less space than \emph{serialWT}.
Our \emph{domain decomposition} algorithms also match their expected memory consumption, as they require the same space as the algorithm used for the construction of the partial \WT{}s in addition to a bit vector of the size of the text used for merging the partial \WT{}s.
(If \emph{psWT} is used to compute the partial \WT{}s, the space used for sorting of the text slices can be reused for the merging.)
The memory consumption of \emph{levelWT} is enormous, requiring around 77\,\% more memory than \emph{pcWT} in both cases (sequential and parallel).

In practice, our algorithms require less memory than their competitors (with WORDS being the only exception).\footnote{The implementations by Fuentes-Sep{\'{u}}lveda et al.~\cite{Fuentes-Sepulveda2016parallelWaveletTree} require a similar amount of memory but are significantly slower.}
One reason is that our competitors use multiple arrays of text size to speed up the computation.

\section{The Wavelet Matrix}
\label{sec:the_wavelet_matrix}
A variant of the \WT, the \emph{wavelet matrix} (\WM{}), was introduced in 2011 by Claude et al.~\cite{Claude2015}.
It requires the same space as a \WT{} and has the same asymptotic running times for access, rank, and select; 
but in practice it is often faster than a \WT{} for rank and select queries~\cite{Claude2015}, as it needs less calls to binary rank/select data structures.
However, the fact that the \WM{} loses some nice structural properties of the \WT{} makes it harder to parallelize its construction, as \emph{divide-and-conquer} \WT{}-construction algorithms, e.\,g.~\emph{recWT}~\cite{Labeit2016parallelWaveletTree}, cannot simply be transformed to \WM{}s.

For the definition of the \WM{}, we need additional notations:
\emph{Reversing} the significance of the bits is denoted by \Reverse{}, e.\,g., $\Reverse{(\mathtt{001})_2}=(\mathtt{100})_2$.
The \emph{bit-reversal} permutation\footnote{\url{http://oeis.org/A030109}, last accessed 2017-10-27.} of order $k$ (denoted by \NBRP{k}) is a permutation of $[0,2^{k})$ with $\NBRP{k}(i)=(\Reverse{\Bits{i}})_2$.
For example, $\NBRP{2}=(0,2,1,3)=((\mathtt{00})_2,(\mathtt{10})_2,(\mathtt{01})_2,(\mathtt{11})_2)$.
\NBRP{k} and \NBRP{k+1} can be computed from another, as $\NBRP{k+1}=(2\NBRP{k}(0),\dots,2\NBRP{k}(2^{k}-1),2\NBRP{k}(0)+1,\dots,2\NBRP{k}(2^{k}-1)+1)$ and $\NBRP{k}=(\NBRP{k+1}(0)/2,\dots,\NBRP{k+1}(2^{k}-1)/2)$.
In practice, we can realize the division by a single bit shift.

\subsection*{Wavelet Matrices.}
The \emph{wavelet matrix} (\WM{})~\cite{Claude2015} has only a single bit vector $\BV{}_\ell$ per level $\ell\in[0,\lceil\lg\sigma\rceil)$ like the level-wise \WT{}, but the tree structure is discarded completely in the sense that we do not require each character to be represented in an interval that is covered by the character's interval on the previous level.
In addition, we use the array $\Zeros{}[0,\lceil\lg\sigma\rceil)$ to store the number of zeros at each level $\ell$ in \Zeros{\ell}.

$\BV{}_0$ contains the MSBs of each character in \Text{} in text order (this is the same as the first level of a \WT{}). For $\ell\ge 1$, $\BV{}_{\ell}$ is defined as follows.
Assume that a character $\alpha$ is represented at position $i$ in $\BV{}_{\ell-1}$. Then the position of its $\ell$-th MSB in $\BV{}_{\ell}$ depends on $\BV{}_{\ell-1}[i]$ in the following way:
if $\BV{}_{\ell-1}[i]=0$, \Bit{\ell,\alpha} is stored at position $\rank_{0}(\BV{}_{\ell-1},i)$; otherwise ($\BV{}_{\ell-1}[i]=1$), it is stored at position $\Zeros{\ell-1}+\rank_{1}(\BV{}_{\ell-1},i)$.
For an example, see Figure~\ref{fig:example_WM}.

Similar to the intervals in $\BVT{\ell}$ of the \WT{}, characters of \Text{} form intervals in $\BV{}_\ell$ of the \WM{}.
Again, the intervals at level $\ell$ correspond to bit prefixes of size $\ell$, but due to the construction of the \WM{} we consider the reversed bit prefixes.
The simplicity of the change required to turn the previously discussed \WT{}-construction algorithms in \WM{}-construction algorithms are based on the following Observation:
\begin{observation}\label{obs:intervals_wm}
  Given a character $\Text{i}$ for $i\in[0,n)$ and a level $\ell\in[1,\lceil\lg\sigma\rceil)$ of the \WM{}, the interval pertinent to $\Text{i}$ in $\BV{}_\ell$ can be computed by \Reverse{\BitPre{\ell,\Text{i}}}.
  Namely, $\BV{}_\ell\lbrack i\rbrack=\Bit{\ell, \Text{}^{\prime}\lbrack i\rbrack}$, i.\,e., the $\ell$-th MSB of the $i$-th character of $\Text{}^{\prime}$ in text order, where $\Text{}^{\prime}$ is $\Text{}$ stably sorted using the reversed bit prefixes of length $\ell$ of the characters as key.
\end{observation}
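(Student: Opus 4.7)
The plan is to prove the observation by induction on the level $\ell$, interpreting the \WM{} construction as a sequence of stable sorts of \Text{}. I would first formalise the step from $\BV{}_{\ell-1}$ to $\BV{}_{\ell}$ as a stable sort: the construction moves the character $\alpha$ stored at position $i$ of $\BV{}_{\ell-1}$ to position $\rank_{0}(\BV{}_{\ell-1},i)$ when $\BV{}_{\ell-1}[i]=\Bit{\ell-1,\alpha}=0$, and past $\Zeros{\ell-1}$ otherwise, preserving relative order within each block. Hence the transition is exactly the stable sort by the key $\Bit{\ell-1,\cdot}$. Composing with the inductive hypothesis, that the order in $\BV{}_{\ell-1}$ is obtained from \Text{} by stably sorting on $\Reverse{\BitPre{\ell-1,\cdot}}$, yields the order in $\BV{}_{\ell}$ as two consecutive stable sorts of \Text{}: first by $\Reverse{\BitPre{\ell-1,\cdot}}$, then by $\Bit{\ell-1,\cdot}$.

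A standard property of stable sorting says that this composition equals the stable sort by the lexicographic key $(\Bit{\ell-1,\cdot},\Reverse{\BitPre{\ell-1,\cdot}})$ with $\Bit{\ell-1,\cdot}$ as the primary component. The main combinatorial step is then the identity that this key, read as a bit string with $\Bit{\ell-1,\cdot}$ as most significant bit followed by $\Bit{\ell-2,\cdot},\dots,\Bit{0,\cdot}$, coincides with $\Reverse{\BitPre{\ell,\cdot}}$. This closes the induction; the base case $\ell=0$ holds because $\BV{}_{0}$ is in text order and the empty prefix is a trivial sort key. Both claims of the observation follow immediately: characters sharing the same value of $\Reverse{\BitPre{\ell,\cdot}}$ are contiguous in the resulting $\Text{}^{\prime}$ and therefore occupy an interval in $\BV{}_{\ell}$, and by construction $\BV{}_{\ell}[i]=\Bit{\ell,\Text{}^{\prime}[i]}$.

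The main obstacle is bookkeeping rather than conceptual difficulty. I need to keep track of the off-by-one between the $\ell$-th MSB (which is \emph{stored} at level $\ell$) and the $(\ell-1)$-th MSB (which serves as the \emph{sort key} that produces level $\ell$ from level $\ell-1$), and to handle carefully the MSB-first convention for interpreting bit strings as integers together with the action of the $\Reverse{}$ operator. Once these conventions are fixed, the lex-product-equals-reversed-prefix identity reduces to a one-line computation on integer representations.
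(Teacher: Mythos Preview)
Your inductive argument is correct. The key steps---that the transition from level $\ell-1$ to level $\ell$ is precisely a stable sort by $\Bit{\ell-1,\cdot}$, that composing two stable sorts yields the stable sort by the lexicographic product with the later key as primary, and the identity $\Reverse{\BitPre{\ell,\alpha}}=\Bit{\ell-1,\alpha}\cdot 2^{\ell-1}+\Reverse{\BitPre{\ell-1,\alpha}}$---are all sound, and your remarks about the off-by-one bookkeeping are apt.

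Note, however, that the paper does not actually prove this statement: it is recorded as an \emph{observation} and left without justification, relying on the reader's familiarity with the recursive definition of the wavelet matrix. So there is no ``paper's own proof'' to compare against. Your write-up supplies a clean formal argument where the paper gives none; the induction-on-levels via the radix-sort composition lemma is the natural route and is essentially the only reasonable one here.
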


As with \WT{}s, if the bit vectors are augmented by (binary) rank and select data structures, the \WM{} can be used to answer \access, \rank and \select queries on a text over an alphabet of size $\sigma$ in \Oh{\lg\sigma} time.
We refer to \cite{Claude2015} for a detailed description of these queries.

\subsection{Adaption of our Algorithms to Wavelet Matrices.}
\label{sec:adaption_of_our_algorithms}
When comparing the bit vectors of the \WT{} and the \WM{} at level $\ell$, we see two similarities.
First, both bit vectors contain the $\ell$-th MSB of each character of \Text{} and second, the bits are grouped in intervals with respect to the bit prefix of size $\ell$ of the corresponding character and appear in the same order.
Thus, the number and sizes of the intervals is the same.
The difference is only the \emph{position} of the intervals within each level.
At level $\ell$, the intervals in $\BVT{\ell}$ of a \WT{} occur in increasing order with respect to the bit prefixes of size $\ell$ of the characters in \Text{}, i.\,e., the first interval corresponds to characters with bit prefix $0$, the second corresponds to characters with bit prefix $1$, and so on.
The intervals in $\BV{}_\ell$ of a \WM{} occur in increasing order with respect to the bit-reversal permutation \NBRP{\ell} of the characters in \Text{}.

All our algorithms (\emph{pcWT}, \emph{psWT}, \emph{ddpcWT} and \emph{ddpsWT}) can be adjusted to compute the \WM{} instead of the \WT{}.
We call them \emph{pcWM}, \emph{psWM}, \emph{ddpcWM} and \emph{ddpsWM}, respectively.
To do so, we just have to replace the identity permutation by the bit reversal permutation $\NBRP{}$, i.\,e., choosing $id=\NBRP{\ell}$ in lines \ref{alg:compute_other_borders} and \ref{alg:pswm_second_prefix_sum} in Algorithms \ref{alg:pcwm} and \ref{alg:pswm}, resp.
Then, the resulting starting positions of the intervals for bit prefixes are in bit reversal permutation order, i.\,e., the starting positions of the intervals for a \WM{} (compare Observations~\ref{obs:interval_wt} and~\ref{obs:intervals_wm}).
\begin{table}[t]
  \centering
  \scriptsize
  \begin{tabular}{ccccccccccccc}
    \toprule
      & \multicolumn{4}{c}{pc} & \multicolumn{4}{c}{ps} & \multicolumn{2}{c}{ddpc} & \multicolumn{2}{c}{ddpc}\\
    \cmidrule[0.225ex](r){2-5}
    \cmidrule[0.225ex](lr){6-9}
    \cmidrule[0.225ex](lr){10-11}
    \cmidrule[0.225ex](l){12-13}
    Text & $t_1$ & $t_{32}$ & $\text{m}_1$ & $\text{m}_{32}$ & $t_1$ & $t_{32}$ & $\text{m}_1$ & $\text{m}_{32}$ & $t_{32}$ & $\text{m}_{32}$ & $t_{32}$ & $\text{m}_{32}$\\
    \cmidrule(r){1-1}\cmidrule(r){2-3}\cmidrule(lr){4-5}\cmidrule(lr){6-7}\cmidrule(lr){8-9}\cmidrule(r){10-10}\cmidrule(lr){11-11}\cmidrule(lr){12-12}\cmidrule(l){13-13}
     XML    &4.737&0.988&1.875&1.875&4.355&0.531&2.875&2.875&0.468&2.750&0.529&2.875\\
    \cmidrule(r){1-1}\cmidrule(r){2-3}\cmidrule(lr){4-5}\cmidrule(lr){6-7}\cmidrule(lr){8-9}\cmidrule(r){10-10}\cmidrule(lr){11-11}\cmidrule(lr){12-12}\cmidrule(l){13-13}
     DNA    &3.895&1.479&1.500&1.500&4.293&0.904&2.500&2.500&0.691&2.000&0.805&2.500\\
     \cmidrule(r){1-1}\cmidrule(r){2-3}\cmidrule(lr){4-5}\cmidrule(lr){6-7}\cmidrule(lr){8-9}\cmidrule(r){10-10}\cmidrule(lr){11-11}\cmidrule(lr){12-12}\cmidrule(l){13-13}
     ENG    &42.705&7.388&2.000&2.000&41.847&4.203&3.000&3.000&3.252&3.000&3.975&3.000\\
     \cmidrule(r){1-1}\cmidrule(r){2-3}\cmidrule(lr){4-5}\cmidrule(lr){6-7}\cmidrule(lr){8-9}\cmidrule(r){10-10}\cmidrule(lr){11-11}\cmidrule(lr){12-12}\cmidrule(l){13-13}
     PROT   &12.820&3.780&1.625&1.625&11.695&1.438&2.625&2.625&1.112&2.250&1.344&2.625\\
     \cmidrule(r){1-1}\cmidrule(r){2-3}\cmidrule(lr){4-5}\cmidrule(lr){6-7}\cmidrule(lr){8-9}\cmidrule(r){10-10}\cmidrule(lr){11-11}\cmidrule(lr){12-12}\cmidrule(l){13-13}
     SRC    &3.968&0.750&2.000&2.000&3.796&0.488&3.000&3.000&0.371&3.001&0.414&3.001\\
     \cmidrule(r){1-1}\cmidrule(r){2-3}\cmidrule(lr){4-5}\cmidrule(lr){6-7}\cmidrule(lr){8-9}\cmidrule(r){10-10}\cmidrule(lr){11-11}\cmidrule(lr){12-12}\cmidrule(l){13-13}
     1000G  &32.322&4.900&1.250&1.250&30.908&6.969&2.250&2.250&2.429&2.250&4.750&2.250\\
     \cmidrule(r){1-1}\cmidrule(r){2-3}\cmidrule(lr){4-5}\cmidrule(lr){6-7}\cmidrule(lr){8-9}\cmidrule(r){10-10}\cmidrule(lr){11-11}\cmidrule(lr){12-12}\cmidrule(l){13-13}
     CC     &191.166&33.893&2.000&2.000&203.923&15.817&3.000&3.000&13.798&3.000&16.180&3.000\\
     \cmidrule(r){1-1}\cmidrule(r){2-3}\cmidrule(lr){4-5}\cmidrule(lr){6-7}\cmidrule(lr){8-9}\cmidrule(r){10-10}\cmidrule(lr){11-11}\cmidrule(lr){12-12}\cmidrule(l){13-13}
     WORDS  &8.733&0.774&1.587&2.225&13.394&3.007&2.505&4.650&4.308&4.153&5.532&5.071\\
    
    \bottomrule
  \end{tabular}
  \caption{%
    Experimental results of our \WM{}-construction algorithms described in \S{\ref{sec:adaption_of_our_algorithms}}.
    The hardware and test instances are described in \S{\ref{sec:experiments}}.
    We measured the running time (in seconds) of the algorithms using one core ($t_1$) and 32 cores ($t_{32}$).
    The memory is given in bytes per byte of the input text when using one core ($\text{m}_1$) and 32 cores ($\text{m}_{32}$).
    Again, our algorithms based on domain decomposition use the corresponding sequential version of \emph{pcWM} or \emph{psWM} when run on one core. \label{tab:wm_experiments}}
\end{table}
In addition to the different order of the intervals, we also need to store the number of zeros.
To this end, we use the starting positions of the intervals
, as the number of zeros in any level is equal to even bit prefixes in the previous level.
This requires additional $\lceil\lg\sigma\rceil\lceil\lg n\rceil$ bits of space and \Oh{\sigma} time.

\begin{figure}[t]
    \centering
    \includegraphics[scale=.95]{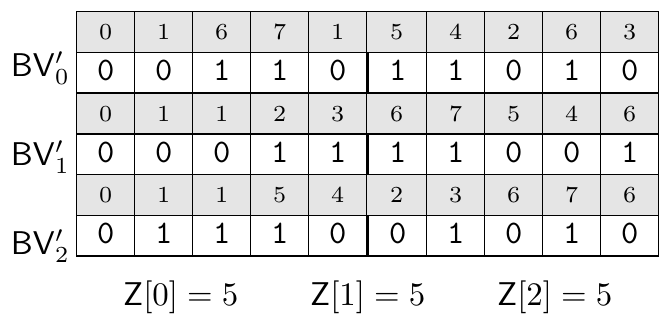}
    \caption{The \WM{} of our running example, $\Text{}=\mathtt{0167154263}$.
    The light gray (\protect\tikz[baseline=.25ex]{ \fill[black!20,draw] (0, 0) rectangle (.3, .3); }) arrays contain the characters represented at the corresponding position in the bit vector and are not a part of the \WM{}.
    The thick lines highlight the number of zeros at each level.\label{fig:example_WM}}
\end{figure}

\subsection{From the Wavelet Tree to the Wavelet Matrix.}
\label{sect:from_wt_to_wm}
We can also make use of these similarities by showing that \emph{every} algorithm that can compute a \WT{} can also compute a \WM{} in the same asymptotic time.

\begin{lemma}
  \label{lem:from_wt_to_wm}
  We can compute in-place an array \textsf{X} and a bit vector \textsf{U} with rank and select data structures in time \Oh{n+\sigma} and space $(n+\sigma)(1+o(1))+(\sigma+2)\lceil\lg n\rceil$ bits, such that $\BVT{\ell}\lbrack i\rbrack=\BV{}_\ell\lbrack j\rbrack$ with 
  $$j=%
  \begin{cases}%
    i&, \text{if~}\ell\leq1\\%
    \mathsf{X}\lbrack2^{\ell-1}-2+bp\rbrack+\text{off}&,\text{otherwise}%
  \end{cases}$$
  where $bp=\mathop{\mathrm{prefix}}(\ell,\rank_0(\mathsf{U},\select_1(\mathsf{U},i+1)))$ and $\text{off}=i-\rank_1(\mathsf{U},\select_0(\mathsf{U},bp\ll (\lceil\lg\sigma\rceil - \ell)))$, with $\ll k$ denoting a left bit shift (by $k$ bits), i.\,e., affixing $k$ zeros on the right hand side.
\end{lemma}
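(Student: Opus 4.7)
The plan is to leverage the structural observation that at every level $\ell$, the bit vectors $\BVT{\ell}$ and $\BV{}_\ell$ contain exactly the same multiset of bits (the $\ell$-th MSBs of the characters in \Text{}), partitioned into intervals indexed by the $\ell$-bit prefixes, with the same text order inside each interval. Only the arrangement of the intervals differs: the WT lists them in lexicographic order of the prefix, whereas the WM lists them in bit-reversal order. Hence the mapping $i \mapsto j$ reduces to (a) identifying the prefix $bp$ of the interval containing position $i$ in $\BVT{\ell}$, (b) retrieving the starting position of that interval in $\BV{}_\ell$, and (c) adding the within-interval offset $i - s^{WT}_\ell(bp)$, where I write $s^{WT}_\ell(bp)$ and $s^{WM}_\ell(bp)$ for the interval starting positions in $\BVT{\ell}$ and $\BV{}_\ell$, respectively.

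First I would build $\mathsf{U}$ as the unary encoding of the character histogram: after one linear pass through \Text{} to compute $\Hist{c}$ for $c \in [0,\sigma)$, set $\mathsf{U} = 1^{\Hist{0}}\, 0\, 1^{\Hist{1}}\, 0 \cdots 1^{\Hist{\sigma-1}}\, 0$. This takes $O(n+\sigma)$ time and occupies $n+\sigma$ bits; equipping $\mathsf{U}$ with standard constant-time rank/select supports of $o(n+\sigma)$ bits accounts for the $(n+\sigma)(1+o(1))$ summand in the space bound. With this layout, $\rank_0(\mathsf{U},\select_1(\mathsf{U},i+1))$ returns the character $c$ occupying sorted-by-value position $i$, and $\BitPre{\ell,c}$ produces the interval index $bp$; this is the correct prefix for $\BVT{\ell}[i]$ because both arrangements induce the same grouping of the $n$ positions by $\ell$-bit prefix, with identical cumulative sizes. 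Symmetrically, $\rank_1(\mathsf{U},\select_0(\mathsf{U}, bp\ll(\lceil\lg\sigma\rceil-\ell)))$ evaluates to the total count of characters whose value is strictly less than the smallest character having $\ell$-bit prefix $bp$, which is exactly $s^{WT}_\ell(bp)$, so $\text{off} = i - s^{WT}_\ell(bp)$ is the within-interval offset.

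Second, I would populate $\mathsf{X}$ with the starting positions of the WM intervals for each level $\ell \ge 2$ and each $\ell$-bit prefix $bp$. From the histogram, I derive the interval sizes at the last level, then successively combine pairs to obtain the sizes at every higher level using the identity already employed in Algorithm~\ref{alg:pcwm} (line~\ref{alg:update_histogram}); at each level, the starting positions are then the prefix sum taken in bit-reversal order, which can be computed from level to level with the recursion on $\NBRP{\ell}$ given in Section~\ref{sec:the_wavelet_matrix}. This costs $O(\sigma)$ time and $O(\sigma)\cdot\lceil\lg n\rceil$ bits, fitting into the $(\sigma+2)\lceil\lg n\rceil$ term (the additive constant absorbing counter variables and the trivial levels $\ell \le 1$). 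The entry of $\mathsf{X}$ indexed as in the lemma then yields $s^{WM}_\ell(bp)$, so the claimed formula reads $j = s^{WM}_\ell(bp) + (i - s^{WT}_\ell(bp))$, which is the desired translation by the structural observation.

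The main obstacle will be the bookkeeping of $\mathsf{X}$: one must verify that the indexing scheme $2^{\ell-1}-2+bp$ consistently assigns exactly one slot per $(\ell, bp)$ pair over $\ell \in [2, \lceil\lg\sigma\rceil)$ and $bp \in [0,2^\ell)$, and that the rank/select overhead on $\mathsf{U}$ really fits in the $o(n+\sigma)$ budget while supporting constant-time queries (so that the formula can be evaluated in $O(1)$ per position). Once those succinct data structure details are pinned down, correctness of the formula follows by combining the two identities established above, and the stated $O(n+\sigma)$ time bound is immediate from the histogram pass, the linear-time construction of $\mathsf{U}$ with its rank/select indices, and the $O(\sigma)$-time population of $\mathsf{X}$.
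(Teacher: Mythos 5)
Your proposal follows essentially the same route as the paper's proof: the same unary-histogram bit vector $\mathsf{U}$ with rank/select support, the same identities $\rank_0(\mathsf{U},\select_1(\mathsf{U},i+1))$ for recovering the length-$\ell$ bit prefix and $i-\rank_1(\mathsf{U},\select_0(\mathsf{U},bp\ll(\lceil\lg\sigma\rceil-\ell)))$ for the within-interval offset, and the same construction of $\mathsf{X}$ by pairwise combination of histogram entries followed by prefix sums in bit-reversal order. The bookkeeping concern you flag about the indexing of $\mathsf{X}$ is legitimate (the paper's own proof uses offsets $2^{\ell}-2$ rather than the $2^{\ell-1}-2$ of the lemma statement, and packs the per-level tables right-to-left before compacting), but this does not change the substance: your argument is correct and matches the paper's.
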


\begin{proof}
We require two auxiliary data structures for the transformation.
  The first one is the bit vector $\mathsf{U}$ of length $n+\sigma$ that stores the unary representation of the histogram of all characters in \Text{}.
  The second one is an array $\mathsf{X}$ of size $(\sigma + 2)\lceil\lg n\rceil$ bits, which at first is used for counting, and later on stores the starting positions of all intervals in the \WM{}.

  To compute $\mathsf{U}$ we first count the number of occurrences of all characters and store them in $\mathsf{X}$ such that $\mathsf{X}\lbrack i\rbrack=|\lbrace j\in[0,n)\colon \Text{j}=i\rbrace|$ for all $i\in[0,\sigma)$.
  Then, the unary histogram is given by $\mathsf{U}=1^{\mathsf{X}\lbrack \mathtt{0}\rbrack}\mathtt{01}^{\mathsf{X}\lbrack 1\rbrack}\mathtt{0}\dots\mathtt{1}^{\mathsf{X}\lbrack \sigma - 1\rbrack}$.
  In addition, we augment $\mathsf{U}$ with a rank/select data structure.
  All this requires \Oh{n+\sigma} time and $o(n+\sigma)$ bits space in addition to $\mathsf{U}$ and $\mathsf{X}$.

  Next, we want to compute the starting positions of the intervals in the \WM{} (i.\,e., fill the array $\mathsf{X}$ with its final content).
  We require those for intervals corresponding to bit prefixes of size $\ell$ with $\ell\in[1,\lceil\lg\sigma\rceil)$, i.\,e., for all but the first level of the \WM{}.
  To this end, we compute the number of occurrences of characters that share a bit prefix of size $\lceil\lg\sigma\rceil - 1$ in the first $\lceil\sigma/2\rceil-1$ positions of $\mathsf{X}$.
  With the histogram information still in $\mathsf{X}$, this can be done by setting $\mathsf{X}\lbrack i\rbrack=\mathsf{X}\lbrack 2i\rbrack+\mathsf{X}\lbrack 2i+1\rbrack$ for all $i\in[0,\lceil\sigma/2\rceil)$ in increasing order.
  We set all other positions of \textsf{X} to zero.
  Next, we compute the zero based prefix sum with respect to $\NBRP{\lceil\lg\sigma\rceil - 1}$ of the first $\lceil\sigma/2\rceil - 1$ entries of \textsf{X} and in the last $\lceil\sigma/2\rceil - 1$ entries of \textsf{X}.
  Here, ``respect to $\NBRP{\lceil\lg\sigma\rceil - 1}$'' means that character $\NBRP{\lceil\lg\sigma\rceil - 1}(i)$ follows character $\NBRP{\lceil\lg\sigma\rceil - 1}(i - 1)$ for all $i\in[1,\lceil\sigma/2\rceil)$.
  In the same fashion, we compute the starting positions of the intervals in all other levels.
  (By first computing the number of occurrences of bit prefixes of size $k$ using the ones of size $k+1$ and storing the zero based prefix sum with respect to $\NBRP{\lceil\lg k\rceil}$ in the rightmost free entries of \textsf{X}.)
  The $\sigma + 2$ entries (of size $\lceil\lg n\rceil$) in \textsf{X} are sufficient for this.
  Since the first entries of \textsf{X} can be empty (depending on $\sigma$), we finally move the starting positions to the left, such that the first starting position is stored in $\mathsf{X}\lbrack0\rbrack$.
  All this can be done in \Oh{\sigma} time without any additional space.
  Therefore, the construction of \textsf{U} (its augmenting rank/select data structure) and \textsf{X} requires \Oh{n+\sigma} time and $(n+\sigma)(1+o(1))+(\sigma+2)\lceil\lg n\rceil$ bits of space (including \textsf{U} and \textsf{X}).

  Now we need to answer queries asking for a position $j\in[0,n)$ in $\BV{}_\ell$ given a position $i\in[0,n)$ in $\BVT{\ell}$ for $\ell\in[0,\lceil\lg \sigma\rceil)$ in constant time, i.\,e., the position $j$ in the \WM{} corresponding to the position $i$ in the \WT{}.
  If $\ell\leq 1$ we know that $j=i$, because the bit vectors of the \WT{} and \WM{} are the same for the first two levels.
  Otherwise ($\ell>1$), the computation of the position $j$ consists of two steps.
  First, we determine the starting position of the interval in the \WM{} (using $\mathsf{X}$).
  Second, we compute the number of entries in the interval existing before $i$ (which is the same for \WM{} and \WT{}, as the intervals are the same):

  \begin{enumerate}
    \item We first need to identify the bit prefix of length $\ell$ corresponding to the interval containing $i$.
    Note that we are only interested in the bit prefix and not in the character $c$ corresponding to position $i$.
    There are at least $i-1$ (or none, if $i=0$) characters occurring in \Text{} whose bit prefix of length $\ell$ is at most \BitPre{\ell, c}.
    (There are more than $i-1$ characters if at least one character with bit prefix \BitPre{\ell, c} occurs after $c$ in \Text{}.)
    Therefore, $c^{\prime}=\rank_0(\mathsf{U}, \select_1(\mathsf{U}, i+1))$ has the same bit prefix of length $\ell$ as $c$, i.\,e., $bp=\BitPre{\ell, c^{\prime}}=\BitPre{\ell, c}$.
    Since we have stored all starting positions of the intervals on level $\ell$ in the \WM{} in $\mathsf{X}[2^\ell-2,2^{\ell+1})$ the starting position is $\mathsf{X}\lbrack 2^\ell-2+bp\rbrack$.
    \item Now we need to compute the offset of the position from the starting position of the interval.
    To do so, we compute the smallest character contained in the interval by padding the bit prefix with $\lceil\lg\sigma\rceil-\ell$ 0's giving us a value $r = \select_0(\mathsf{U}, bp\ll \lceil\lg\sigma\rceil-\ell)$.
    Next, we determine the number of 1's occurring before the $r$-th 0 in $\mathsf{U}$ to compute the offset, i.\,e., $\textit{off}=i-\rank_1(\mathsf{U}, r).$
  \end{enumerate}
  Since all operations used for querying require constant time and there is only a constant number of operations, the query can be answered in constant time.
\end{proof}

\section{Conclusions}
We presented new sequential and parallel wavelet tree (and matrix) construction algorithms.
Their unifying feature is their bottom-up approach, which saves repeated histogram computations per level from scratch and is also responsible for their space consciousness.
Our experiments showed that our new sequential algorithms are up to twice as fast as the previously known algorithms while requiring just a fraction of the memory (at most half as much).
In addition to the practical work, we also have shown how to (theoretically) adopt general \WT{}-construction algorithms to compute a \WM{} in the same asymptotic runtime.

The presented algorithms are the first practical \emph{parallel} \WM{}-construction algorithms.
It remains an open problem how to design parallel algorithms for wavelet \emph{matrices} that scale as well as the best one for wavelet \emph{trees} \cite{Labeit2016parallelWaveletTree}.

\section*{Acknowledgments}
We would like to thank Benedikt Oesing for implementing early prototypes of different sequential \WM{}-construction algorithms in his Bachelor's thesis \cite{Oesing2016} indicating promising approaches.
Further thanks go to Nodari Sitchinava (U.\ Hawaii) for interesting discussions on the work-time paradigm.


\bibliographystyle{plain}
\bibliography{lit}

\end{document}